\newtheorem{theorem}{Theorem}
\newenvironment{proof}{{\bf Proof:}}{\hfill\rule{1.5mm}{3mm}\vspace{0.1in}}
\newcommand{\comment}[1]{}
\begin{document}
\title{Low congestion online routing and an improved mistake bound
  for online prediction of graph labeling}

\author{ Jittat Fakcharoenphol\footnote{Department of Computer
    Engineering, Kasetsart University, Bangkok, Thailand
    10900. E-mail: {\tt jittat@gmail.com}. Supported by the Thailand
    Research Fund Grant MRG5080318.} \and Boonserm
  Kijsirikul\footnote{Department of Computer Engineering, Chulalongkorn
    University, Bangkok, Thailand 10330.  E-mail: {\tt
      Boonserm.K@chula.ac.th}.}}

\maketitle\thispagestyle{empty}

\begin{abstract}
In this paper, we show a connection between a certain online
low-congestion routing problem and an online prediction of graph
labeling.  More specifically, we prove that if there exists a routing
scheme that guarantees a congestion of $\alpha$ on any edge, there
exists an online prediction algorithm with mistake bound $\alpha$
times the cut size, which is the size of the cut induced by the label
partitioning of graph vertices.  With previous known bound of $O(\log
n)$ for $\alpha$ for the routing problem on trees with $n$ vertices,
we obtain an improved prediction algorithm for graphs with high
effective resistance.

In contrast to previous approaches that move the graph problem into
problems in vector space using graph Laplacian and rely on the
analysis of the perceptron algorithm, our proof are purely
combinatorial.  Further more, our approach directly generalizes to the
case where labels are not binary.
\end{abstract}

\section{Introduction}
We are interested in an online prediction problem on graphs.  Given a
connected graph $G=(V,E)$ and a labeling $\ell:V\rightarrow\{-1,+1\}$,
unknown to the prediction algorithm, in each round $i$, for
$i=1,2,\ldots$, an adversary asks for a label of a vertex $v_i\in V$,
the prediction algorithm provides the answer $y_i$, and then receives
the correct label $\hat{y}_i=\ell(v_i)$.  The goal is to minimize the
number of rounds that the algorithm makes a mistake, i.e., rounds $i$
such that $y_i\neq\hat{y}_i$.  To make our presentation clean, in this
work we do not count the mistake made on the first question
$v_1$.\footnote{To properly account this, one can simply add 1 to our
  mistake bound.}

This problem has been studied with standard online learning tools such
as the perceptron algorithm.  Herbster, Pontil, and
Wainer~\cite{HerbsterPW-ICML05-Learning-Over-Graphs}, and Herbster and
Pontil~\cite{HerbsterP-NIPS07-Perceptron} use pseudoinverse of graph
Laplacian as a kernel and provide a mistake bound that depends on the
size of the cut induced by the partition based on the real labeling of
vertices and the largest effective resistance between any pair of
vertices in the graph.  Recently, Herbster~\cite{Herbster-ALT08}
exploits the cluster structure of the labeling on the graph, and
provides an improved mistake bounds.

Pelckmans and Suykens~\cite{PelckmansS08-MLG08-labeling} present a
combinatorial algorithm for the problem that predicts a label of a
given vertex based on known labels of its neighbors.  They also prove
a bound on the number of mistakes when the labels of adjacent vertices
are known.  However, their bound is very loose since it does not count
every mistakes and their proof is still based on graph Laplacian.  We
shall compare the bound that we obtain with previous bounds of
Herbster {\em et. al.}~\cite{HerbsterPW-ICML05-Learning-Over-Graphs,
  HerbsterP-NIPS07-Perceptron, Herbster-ALT08} and of Pelckmans and
Suykens~\cite{PelckmansS08-MLG08-labeling} in
Section~\ref{sect:comparison}.

This work follows the initiation of Pelckmans and Suykens.  We show
connection between the prediction problem and the following online
routing problem, first introduced by Awerbuch and
Azar~\cite{AwerbuchA95-multicast} in their study of online multicast
routing.  Given a connected graph $G=(V,E)$, the algorithm receives a
sequence of requests $r_1,r_2,\ldots$, where $r_i\in V$, and, for each
$r_i$, where $i>0$, has to route one unit of flow from $r_i$ to some
previous know $r_j$ where $j<i$.  The algorithm works in an online
fashion, i.e., it has to return a route for $r_i$ before receiving
other requests $r_{i'}$, where $i'>i$.  Given a set of routes, we
define the {\em congestion} $Cong(e)$ incurred on edge $e\in E$,
defined as the number of routes that use $e$.  The performance of the
algorithm is measured by the maximum congestion incurred on any edge.

We prove, in Section~\ref{sect:reduction}, that if there exists an
algorithm $A$ with a guarantee that the congestion incurred on any
edge will be no greater than $\alpha$, there exists an online
prediction algorithm with the mistake bound of 
\[
\alpha\cdot |cut(\ell)|,
\]
where $cut(\ell)$ be the set of edges joining pairs of vertices with
different labels, i.e., $cut(\ell)=\{(u,v)\in E:\ell(u)\neq\ell(v)\}$.

In Section~\ref{sect:mistake-bound}, we apply the known congestion
bound to show the mistake bound for the graph prediction problem, and
compare the bound obtained with the bounds from previous results.

We note that our approach directly generalizes to the case when labels
are not binary (i.e., when the labeling function $\ell$ maps $V$ to an
arbitrary set $L$ of labels) with the same mistake bound.

\section{Reduction to low-congestion routing}
\label{sect:reduction}
We first present an online prediction algorithm from an online routing
algorithm $A$.  The prediction algorithm $P_A$ is very simple, given a
vertex $v_i$, it uses $A$ to route one unit of flow from $v_i$ to any
vertices $v_j$ with known labels, it then returns the known label
$\ell(v_j)$ as the prediction.

We prove the following theorem.

\begin{theorem} 
\label{thm:low-congestion-low-mistake}
If $A$ guarantees that no edges is used more than $\alpha$ times, the
prediction algorithm would make at most $\alpha\cdot |cut(\ell)|$
mistakes, not including the mistake made on the first query $v_1$.
\end{theorem}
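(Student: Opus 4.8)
The plan is to charge each mistake to an edge in $cut(\ell)$, showing that no edge gets charged more than $\alpha$ times. The key observation is that a mistake on query $v_i$ happens precisely when the routing algorithm $A$ routes one unit of flow from $v_i$ to some previously-queried vertex $v_j$ with $\ell(v_j)\neq\ell(v_i)$. Since the route is a path (or a walk) in $G$ from $v_i$ to $v_j$ whose endpoints have different labels, that path must cross the cut: there is at least one edge $e$ on the route with one endpoint labeled $\ell(v_i)$ and the other labeled $\ell(v_j)$, hence $e\in cut(\ell)$. I would pick one such edge on the route (say, the first one encountered when traversing from $v_i$) and charge the mistake to it.

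First I would set up notation: let $\pi_i$ denote the route that $A$ returns for the request $v_i$, and recall that $\pi_i$ connects $v_i$ to some $v_j$ with $j<i$. Second, I would argue the charging is well-defined: whenever a mistake occurs on round $i\ge 2$, the endpoints of $\pi_i$ carry distinct labels, so walking along $\pi_i$ the label must change at least once, yielding an edge $e_i\in cut(\ell)$ that lies on $\pi_i$; fix $e_i$ to be any such edge. Third, I would bound the total number of charges to a fixed edge $e$: each mistake charged to $e$ corresponds to a distinct round $i$ whose route $\pi_i$ uses $e$, so the number of such rounds is at most the number of routes using $e$, which is exactly $Cong(e)\le\alpha$ by the hypothesis on $A$. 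Summing over $e\in cut(\ell)$ gives at most $\alpha\cdot|cut(\ell)|$ mistakes among rounds $i\ge 2$.

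The one subtlety — and the only place requiring care rather than being purely mechanical — is making sure that a mistake really does force the route to cross the cut, and that the endpoint $v_j$ that $A$ picks is indeed among the already-labeled vertices (so that $\ell(v_j)$ is defined and the prediction $P_A$ makes is well-specified). This is immediate from the description of $P_A$ and from the fact that a mistake means $y_i = \ell(v_j) \neq \ell(v_i) = \hat y_i$, but it is worth stating explicitly. I do not expect any genuine obstacle here; the argument is a clean counting/charging argument, and the congestion guarantee does all the heavy lifting. For the non-binary generalization mentioned in the introduction, the identical argument works: a mistake still means the route's endpoints have different labels, still forcing a cut edge on the route, and $cut(\ell)$ is defined the same way.
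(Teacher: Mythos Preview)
Your proposal is correct and follows essentially the same charging argument as the paper: each mistake forces the route to contain a cut edge, charge the mistake to one such edge, and the congestion bound of $\alpha$ limits the number of charges per cut edge. The paper's proof is slightly terser but identical in structure, including the remark that the argument carries over unchanged to non-binary labels.
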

\begin{proof}
We shall show that the number of mistake is at most $\alpha\cdot
|cut(\ell)|$.  Note that for each mistake $P_A$ makes on vertex $v_i$,
$A$ routes $v_i$ to some known vertex $v_j$ along a path $P_i$.  Since
$P_A$ predicts $\ell(v_j)$ and makes a mistake, we have
$\ell(v_i)\neq\ell(v_j)$; thus, $P_i$ must use some cut edge $e$ in
$cut(\ell)$.  We charge this mistake to $e$.  We note that $P_i$ may
use many cut edges, but we only charge the mistake to one arbitrary
edge.  Since the routing produced by $A$ uses each edge no more than
$\alpha$ times, each cut edge is charged no more than $\alpha$ times
as well.  Therefore, the number of mistakes $P_A$ makes must be at
most $\alpha\cdot|cut(\ell)|$, as required.
\end{proof}

We note that this proof does not use any fact that the labeling $\ell$
is binary; therefore, the proof holds for general labeling as well.

\section{Mistake bound}
\label{sect:mistake-bound}

To obtain the mistake bound, we first state the result on the online
routing on trees.  The theorem below first appeared in the work of
Awerbuch and Azar~\cite{AwerbuchA95-multicast}, in which they called
the problem {\em restricted offline multicast}, and has been
discovered independently by Chalermsook and
Fakcharoenphol~\cite{ChalermsookF05}.  We state the result in the form
in~\cite{ChalermsookF05} as it matches our settings.

\begin{theorem}
[Theorem 4.4 in~\cite{AwerbuchA95-multicast}, Theorem 1
  in~\cite{ChalermsookF05}]
\label{thm:low-congestion-routing}
For any tree $T$ with $n$ vertices and any sequence of vertices
$t_1,t_2,\ldots t_k$ in $T$, there exists an efficient algorithm that
finds a set of paths $q_1, q_2,\ldots, q_{k-1}$ such that (1) $q_i$
connects $t_{i+1}$ to some $t_j$, such that $j\leq i$, and (2) each
edge in $T$ belongs to at most $O(\log n)$ paths.  Moreover the path
$q_i$ depends only on paths $q_1,q_2,\ldots,q_{i-1}$.
\end{theorem}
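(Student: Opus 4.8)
The plan is to construct the paths $q_i$ by a greedy "nearest previously-seen terminal" rule, combined with a hierarchical decomposition of the tree, and to charge congestion to levels of that decomposition. First I would recall the standard centroid (or separator) decomposition of $T$: repeatedly remove a centroid vertex whose removal splits the remaining component into pieces each of size at most half, obtaining a decomposition tree of depth $O(\log n)$. Every edge of $T$ gets assigned to the first level at which its two endpoints are separated, so each original edge is "cut" at exactly one level, though it may lie inside many components at shallower levels.

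The routing rule I would use when $t_{i+1}$ arrives: among all previously seen terminals $t_j$, $j\le i$, find one that lies in the smallest decomposition component still containing $t_{i+1}$ — equivalently, walk up the decomposition tree from $t_{i+1}$ until the current component contains some earlier terminal, and route $q_i$ along the unique tree path in $T$ from $t_{i+1}$ to that terminal. This is clearly online (it depends only on which of $t_1,\dots,t_i$ have appeared) and efficient. The key structural claim is a congestion bound: fix an edge $e$ of $T$, cut at level $L$ of the decomposition, and let $C$ be the level-$(L)$ component containing $e$ (so $e$ splits $C$ into two halves $C', C''$). I would argue that $e$ is used by $q_i$ only when $t_{i+1}$ is the \emph{first} terminal to appear on its side of $e$ within $C$ while the other side already contains a terminal; every later request on either side connects within its own half (or to the already-populated side through a shallower crossing, not through $e$, because by then a nearer terminal in a deeper component is available). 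Making this precise — showing $e$ is used at most a constant number of times per level it participates in, and that it participates in at most $O(\log n)$ levels — is the heart of the argument.

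The main obstacle I expect is exactly that last bookkeeping: once a terminal has appeared on both sides of $e$ inside the component $C$, no \emph{future} request should route across $e$, because the greedy rule sends each future request to a terminal in a strictly smaller (deeper) component not separated by $e$; but one must check the corner cases where the "nearest earlier terminal" in the smallest enclosing component is not unique, or where a request sits exactly at the centroid, and confirm the tie-breaking does not let $e$ be charged more than $O(1)$ times at its level. I would handle this by a clean invariant: after the first crossing of $e$ by some $q_i$, both $C'$ and $C''$ permanently contain a terminal, and the greedy rule then guarantees every subsequent $t_{m+1}$ in $C'$ (resp. $C''$) finds a terminal within $C'$ (resp. $C''$), so it never uses $e$; hence $e$ is used at most once at level $L$. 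Summing the (at most) one use per level over the $O(\log n)$ levels whose components strictly contain a neighborhood of $e$ — wait, $e$ is cut at a single level, so it is used $O(1)$ times total — gives the bound, but I would double-check whether the decomposition should charge $e$ at shallower levels too; if so the per-level count is still $O(1)$ and the depth is $O(\log n)$, yielding the claimed $O(\log n)$ congestion. Finally I would note the online and efficiency requirements are immediate from the construction, completing the proof.
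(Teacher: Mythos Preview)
The paper does not prove this theorem at all; it is quoted as a black box from the cited references \cite{AwerbuchA95-multicast} and \cite{ChalermsookF05}, so there is no in-paper argument to compare your plan against. For what it is worth, a centroid-decomposition-plus-greedy scheme is indeed the standard way this result is established.

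On the merits of your plan, the place you yourself flag as uncertain is exactly where the argument is still incomplete. Your first instinct---that $e$ is charged only at the single level $L$ where it is cut, hence $O(1)$ times total---is wrong, and your self-correction is the right direction: $e$ lives in every component $C_0\supset C_1\supset\cdots\supset C_L$ along the decomposition chain, and a path $q_i$ decided at any of those levels can cross $e$. Your level-$L$ invariant (``once both $C'$ and $C''$ hold a terminal, $e$ is never used again'') does not prevent crossings decided at shallower levels: a request $t_{m+1}\in C_l\setminus C_{l+1}$ for some $l<L$ may still route to a terminal on the far side of $e$ inside $C_{l+1}$, regardless of whether $C'$ and $C''$ are already populated. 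The correct accounting is $O(1)$ crossings of $e$ per level $l\le L$, summed over the $O(\log n)$ levels, and to make even the per-level $O(1)$ hold you must pin down the tie-breaking in your routing rule: ``route to \emph{some} earlier terminal in the smallest enclosing component'' is too loose, since an adversarial choice can send many level-$l$ paths across $e$. A rule such as ``route to the nearest earlier terminal'' or ``route via the centroid $c_l$'' is what gives the needed control, and establishing the $O(1)$-per-level bound under that specific rule is the actual work you have not yet done.
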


We note that the bound also holds for general graph $G$ by taking $T$
to be its spanning tree.

Using Theorems~\ref{thm:low-congestion-low-mistake}
and~\ref{thm:low-congestion-routing}, we obtain the following mistake
bound.

\begin{theorem}
For graph $G=(V,E)$ and an unknown labeling $\ell:V\rightarrow L$,
there exists an efficient prediction algorithm that makes at most
\[
O(\log |V|)\cdot|cut(\ell)|
\]
mistakes, where $cut(\ell)$ denotes the set of edges joining pairs of
vertices with different labels.
\end{theorem}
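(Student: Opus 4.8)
The plan is to compose the two preceding theorems, using an arbitrary spanning tree of $G$ as the bridge between the general graph and the tree-routing guarantee. First I would fix a spanning tree $T$ of $G$ on the same vertex set $V$, so that $|V(T)|=|V|$ and $E(T)\subseteq E$. I would then treat the online sequence of queried vertices $v_1,v_2,\ldots$ handed out by the adversary as the sequence of routing requests $t_1,t_2,\ldots$ in $T$, and invoke Theorem~\ref{thm:low-congestion-routing} to obtain an efficient online routing algorithm $A$ on $T$ which, for each new request, returns a path in $T$ to some earlier request, depending only on the previously returned paths, with every edge of $T$ used at most $\alpha=O(\log|V|)$ times.

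Since every path in $T$ is also a path in $G$, this $A$ is a legal online routing algorithm for $G$ with the same congestion bound $\alpha$, so it can be fed into the reduction $P_A$ of Section~\ref{sect:reduction}. Applying Theorem~\ref{thm:low-congestion-low-mistake} to $A$ then yields that $P_A$ makes at most $\alpha\cdot|cut_T(\ell)|$ mistakes (not counting the first query), where $cut_T(\ell)$ denotes the set of edges of $T$ joining differently labeled vertices. Efficiency of $P_A$ follows from the efficiency asserted in Theorem~\ref{thm:low-congestion-routing} together with the negligible bookkeeping overhead of $P_A$.

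The only step needing care — and the closest thing to an obstacle, though a mild one — is that the bound just obtained is phrased in terms of the \emph{tree} cut $cut_T(\ell)$ rather than the graph cut $cut(\ell)$. But since $E(T)\subseteq E$, every edge of $cut_T(\ell)$ is also an edge of $cut(\ell)$, hence $|cut_T(\ell)|\le|cut(\ell)|$, and the mistake bound is at most $O(\log|V|)\cdot|cut(\ell)|$, as claimed (the first-query caveat being absorbed into the $O(\cdot)$ whenever $|cut(\ell)|\ge 1$, and trivial otherwise since a connected graph with empty cut is monochromatic). Finally I would note, as already observed after Theorem~\ref{thm:low-congestion-low-mistake}, that nothing in this argument uses that $\ell$ is binary, so the same bound holds for an arbitrary label set $L$.
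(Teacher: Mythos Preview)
Your proposal is correct and is essentially the paper's own argument: the paper simply remarks that the tree-routing bound carries over to a general graph by fixing a spanning tree, and then combines Theorems~\ref{thm:low-congestion-low-mistake} and~\ref{thm:low-congestion-routing} without further comment. Your additional care in distinguishing $|cut_T(\ell)|$ from $|cut(\ell)|$ just makes explicit what the paper leaves implicit (equivalently, since tree paths are graph paths and non-tree edges carry zero congestion, one may apply Theorem~\ref{thm:low-congestion-low-mistake} directly on $G$ and obtain the $|cut(\ell)|$ bound in one step).
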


We note that for line graph, our algorithm is optimal.  One can prove,
in the same way as the proof of optimality of binary search, that an
adversary can fool any algorithm to make $\Omega(\log n)$ mistakes on
a line.

\subsection{Comparison to previous bounds}
\label{sect:comparison}

We compare our mistake bound with the previous results.

\begin{itemize}
\item 
Herbster {\em et. al.}~\cite{HerbsterPW-ICML05-Learning-Over-Graphs,
  HerbsterP-NIPS07-Perceptron} present an algorithm based on
perceptron and prove the bound of
\[
4\cdot |cut(\ell)|\cdot R_G + 2,
\]
for the number of mistakes where $R_G$ is the largest effective
resistance between any pair of nodes in $G$
(see~\cite{HerbsterP-NIPS07-Perceptron}, for the formal definition).
We note that there are graphs where $R_G$ is large, e.g, for line
graph $R_G=n-1$.  Our bound is better when $R_G=\Omega(\log n)$.

While in the worst case $R_G$ can be large, for many classes of
graphs, e.g., highly connected graphs with small diameter, $R_G$ can
be very small.  In~\cite{HerbsterP-NIPS07-Perceptron}, they give an
example where the cut size $|cut(\ell)|$ is linear, while $R_G$ is
$O(1/|cut(\ell)|)$.  In this example, their mistake bound remains
constant, while our bound grows with $|cut(\ell)|$.

\item
In a recent paper, Herbster~\cite{Herbster-ALT08} exploits the cluster
structures of graphs and proves the bound of
\[
{\mathcal N}(G,\rho) + 4\cdot|cut(\ell)|\cdot\rho +1
\]
for any $\rho>0$, on the number of mistakes, where ${\mathcal
  N}(X,\rho)$, the covering number, is the minimum number of sets of
diameter $\rho$ that contain all vertices of $G$ under the semi-norm
induced by the graph Laplacian (see~\cite{Herbster-ALT08} for
definitions).

This bound improves over previous bound
in~\cite{HerbsterP-NIPS07-Perceptron} when the graph has small number
of clusters with small diameters.  Herbster gives an example where the
new algorithm makes only a constant number of mistakes while the
algorithm from~\cite{HerbsterP-NIPS07-Perceptron} makes linear
mistakes.  Again, in this example, our algorithm has linear mistake
bound.

We note that there is a trade-off between the diameter $\rho$ of
clusters and the number clusters in Herbster's bound.  For many
classes of graphs with large diameter, e.g. line graphs, using cluster
structure does not help.  The dependent on the cut size can still be
$\Omega(n)$ for graphs with $n$ vertices.

\item
Pelckmans and Suykens~\cite{PelckmansS08-MLG08-labeling} present a
simple combinatorial algorithm and show that the set $M$ of vertices
where the algorithm predicts incorrectly satisfies $\sum_{v\in M}
d_{M,v}\leq 4\cdot |cut(\ell)|$, where $d_{M,v}$ is the number of
vertices adjacent to $v$ that is also in $M$.  Note that their bound
only accounts for edges between two mistaken vertices.  If there are
no edges between vertices in $M$, their bound does not say anything.
For example, consider the case with line graph with $n$ vertices,
where vertices $1,2,\ldots,n/2$ have label $+1$ and vertices
$n/2+1,\ldots,n$ have label $-1$.  The algorithm of Pelckmans and
Suykens can make $\Omega(n)$ mistakes if an adversary asks the labels
of $1,3,5,\ldots$, while the cut size is just 1.
\end{itemize}

\section{Open questions and discussions}

Our bound depends on the worst case bound on the congestion from the
routing problem.  However, the $O(\log n)$ bound seems very loose for
dense graphs.  It would be nice to see if one can find the connection
between the worst case congestion and the effective resistance.  We
note that when the effective resistance is low, between any two nodes
there must be many short disjoint paths, and this should help reducing
the congestion.  Also, there is extensive literature on online routing
with small congestion (see,
e.g.,~\cite{Racke-FOCS02,HarrelsonHR-SPAA03,Racke-STOC08}).  Can these
results be used to give better mistake bounds as well?

We note that our proof cannot give a mistake bound smaller than
$|cut(\ell)|$.  To improve further, one need a way to account for cut
edges that have not been charged.

Finally, we wish to see any adversarial bound on the number of
mistakes for an online label prediction algorithm.  In this paper, we
have shown that our algorithm is optimal (up to a constant factor) for
line graphs.  The ultimate goal would be to find an optimal algorithm
for general graphs.

\bibliographystyle{plain} \bibliography{predict}

\end{document}